\newcommand{\CGS}{\texttt{vCGS}}
\newcommand{\CGSname}{\texttt{M}}
\renewcommand{\top}{\mathrm{tt}}
\renewcommand{\bot}{\mathrm{ff}}
\newcommand{\naww}[1]{{\langle\!\langle #1 \rangle\!\rangle}}
\def\qed{\hfill{\qedboxempty}      
	\ifdim\lastskip<\medskipamount \removelastskip\penalty55\medskip\fi}
\def\qedboxempty{\vbox{\hrule\hbox{\vrule\kern3pt
			\vbox{\kern3pt\kern3pt}\kern3pt\vrule}\hrule}}
\newtheorem{theorem}{Theorem}
\newtheorem{corollary}{Corollary}
\newtheorem{proof}{Proof.}
\begin{document}

\title{Model Checking $ATL^*$ on \CGS}



\author{F. Belardinelli, C. Dima, I. Boureanu, and V. Malvone}

\maketitle


%
%
%
%
%
%
%
%
%
%

We prove that the model checking $ATL^*$ on \CGS\ is
undecidable.  To do so, we reduce this problem to model checking $ATL^*$
on iCGS.
%

Consider an $ATL^*$ formula $\varphi$ to be model-checked
on a given iCGS $\CGSname = \langle Ag, \{
Act_a \}_{a \in Ag}, S, S_0, P, \tau, \{\sim_a \}_{a \in Ag}, \pi
\rangle$.  Out of $\CGSname$ and $\varphi$, we define an
\CGS\ $\Delta_{\CGSname}$ (which we sometimes refer to simply as $\Delta$)
and an $ATL^*$ formula $\psi_{\varphi,\CGSname}$, as follows.

\underline{Agents \& atoms}. The set of agents in $\Delta_{\CGSname}$ is
$Ag'= Ag \cup\{e\}$, where $e$ denotes the \emph{Environment} agent.
For each agent $a \in Ag$, the set of atoms controlled by $a$ includes
an atom for each action in $Act_a$, i.e., $V_a = \{p_{act} \mid act
\in Act_a\}$.  The environment controls atoms corresponding to each
state of the iCGS and each indistinguishability class for each
agent, i.e., $V_e = \{p_s \mid s \in S\} \cup \{p_{[s]_a } \mid s \in
S, a \in Ag\}$, with $[s]_a = \{ s' \in S \mid s' \sim_a s\}$.

An agent specification is $spec_a = \langle V_a, GC_a \rangle$ with
the guarded commands to be defined hereafter.
First, for each agent specification $spec_a$ and subset $W \subseteq
V_a$, by \emph{$invis^a(W)$} we denote the boolean formula specifying
that all of $a$'s atoms in $W$ are left invisible to any agent other
than $a$, i.e., $a$ does not share any of her controlled atoms from
$W$ with anyone:
\[
invis^a(W) = \bigwedge_{b \in Ag \setminus{\{a\}}} \bigwedge_{v\in W}  vis(v,b) := \bot
\]

For the special case of the environment (i.e., where $a$ in the
above is replaced with the Environment and $W \subseteq V_e$,
etc), we simply write \emph{$invis(W)$} instead of $invis^a(W)$ (i.e.,
$invis(W) = \bigwedge_{b \in Ag} \allowbreak \bigwedge_{v\in W} vis(v,b)  := \bot$.)

Second, for an agent $b \in Ag \setminus{\{a\}}$, by \emph{$vis(W,b)$}
we denote the boolean formula specifying that all atoms in $W$ are
visible to $b$, i.e., $a$ does share all her controlled atoms in $W$
with $b$:
\[
vis(W,b) = \bigwedge_{v\in W} vis(v,b) = \top
\]

For each agent $a\in Ag$ and for the Environment $e$, we also use boolean variables $turn_a$ and $turn_e$ to simulate their turns and mechanise the (synchronisation over) actions.

\underline{Guarded commands of $\texttt{init}$-type.}
\begin{enumerate}
\item For each $s_0 \in  S_0$,  a guarded command $\gamma_{[s_0]_a}$ of $\texttt{init}$-type to agents $a$  is defined in $\Delta_\CGSname$ as follows:
\begin{tabbing}
$\gamma_{[s_0]_a}$ \= $::=$ \=  $p_{[s_0]_a} \rightsquigarrow turn_a = \bot, invis(V_a)$       
\end{tabbing}
\item For each $s_0 \in S_0$, a guarded command $\gamma_{[s_0]_e}$ of
  $\texttt{init}$-type for the Environment $e$ is defined in
  $\Delta_\CGSname$ as follows:
\begin{tabbing}
$\gamma_{0_e}$\= $::=$ \=$\top \!\rightsquigarrow\! invis(\{p_{s_0} \!\mid\! s_0 \in S_0\}), \bigwedge_{b\in Ag}\!\! vis(\{p_{[s_0]_b}\},b)$         
\end{tabbing}
\end{enumerate}

\underline{Guarded commands of $\texttt{update}$-type.}
Let $Act_a$ in $\CGSname$ be given as the set $\{\alpha_1,\ldots,
\alpha_{k_a}\}$.  Then, agent $a$'s guarded commands of
$\texttt{update}$-type are added in $\Delta_\CGSname$ as follows:
\begin{enumerate}
\setcounter{enumi}{2}
\item For every $1\leq i\leq k_a$ and each equivalence class $[s]_a\in
  S$, guarded command $\gamma_{i,[s]_a}$ is defined as
\begin{tabbing}
$\gamma_{i,[s]_a}$ \= $::=$ \= $turn_a \wedge p_{[s]_a}
  \rightsquigarrow$ \\ \> \> $p_{\alpha_i} := \top,  \bigwedge_{j\neq i} p_{\alpha_j} := \bot, turn := \bot$
\end{tabbing}
\item To simulate $a$'s turn (or $a$ ``moving forward'' in the system --hence the name below),   we add the following guarded command:
\begin{tabbing}
$\gamma_{fwd_a}$ \= := \=  $\neg turn \rightsquigarrow turn := \top$ 
\end{tabbing}
\end{enumerate}

Then, the environment agent has the following guarded commands of
$\texttt{update}$-type:
\begin{enumerate}
\setcounter{enumi}{4}
\item One guarded command of $\texttt{update}$-type for $e$, to make him ``move forward'' (i.e., take turns):
\begin{tabbing}
$\gamma_{fwd_e}$ \= $::=$ \= $\neg turn_e \rightsquigarrow turn_e := \top$
\end{tabbing}
\item For each $s \in S$, $a \in Ag$, and $i_a \leq k_a$, let $t$
  denoted $\tau (s, (\alpha_{i_a})_{a\in Ag})$; for this, another
  guarded command for $e$ is as follows:

\begin{tabbing}
$\gamma_{s,(i_a)_{a\in Ag}}$ \= $::=$ \= $turn_e \wedge p_s \wedge p_{\alpha_{i_a}} \rightsquigarrow$  \\ \> \>
    $turn_e := \bot, p_t := \top, \bigwedge_{b\in Ag} p_{[t]_b} := \top,$  \\ \> \>
    $\bigwedge_{p \in\pi(s) } p := \top, \bigwedge_{p \not \in \pi(s)} p := \bot$ \\ \> \>
     $\bigwedge_{u \in S\setminus \{t\}} \Big( p_u = \bot, \bigwedge_{b\in Ag} p_{[u]_b} = \bot\Big)$
\end{tabbing}
\end{enumerate}

Now, using the reduction above (which is in PTIME), we can formally state
the following result:
\begin{theorem}[3.4] \label{th1}
The model checking problem for $ATL^*$ (resp.~$ATL$) on iCGS is
  PTIME-reducible to the same on \CGS.
\end{theorem}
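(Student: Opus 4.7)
The plan is to establish both the polynomial-time bound on the reduction and its semantic correctness. For the polynomial bound, I would count that the $\texttt{init}$-type commands number $O(|S_0|\cdot|Ag|)$ and the $\texttt{update}$-type commands are dominated by the family $\gamma_{s,(i_a)_{a\in Ag}}$, which is polynomial in $|S|$ and in the size of $\tau$'s encoding; each individual guarded command has size polynomial in $|S|+|Ag|+\sum_a|Act_a|$, so the construction is in PTIME. The $ATL$ case is then immediate since it is a fragment of $ATL^*$.

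Next, I would make $\psi_{\varphi,\CGSname}$ precise by structural induction on $\varphi$. A ``macro-round'' of $\CGSname$ is simulated in $\Delta_\CGSname$ by a sequence of $|Ag|+1$ micro-steps (one per agent to commit to an action, one for the environment to compute the successor), so every temporal operator of $\varphi$ must be relativised to those micro-steps at which a fresh $p_t$ marker has just been set, i.e., the macro-boundaries. Concretely, $\nextt\chi$ translates to a formula asserting ``skip $|Ag|+1$ micro-steps, then the translation of $\chi$ holds'', and similarly for $\until$; the strategic quantifier $\naww{A}\chi$ becomes $\naww{A}\chi'$ with inner temporal operators similarly relativised. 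Propositional atoms in $\varphi$ remain unchanged, since $\pi(s)$ is restored by the environment's update command at every macro-boundary.

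The main correctness argument is a bisimulation-style correspondence between macro-configurations of $\Delta_\CGSname$ and configurations of $\CGSname$. I would define a relation $R$ pairing each state $s$ of $\CGSname$ with those global states of $\Delta_\CGSname$ in which $p_s$ and $\{p_{[s]_a}\}_{a\in Ag}$ are the unique true $p_{s'}$ and $p_{[s']_b}$ atoms, the propositional assignment coincides with $\pi(s)$, and every turn flag is reset. I would then show that $R$-related configurations evolve in lockstep: agent $a$'s choice of $\alpha_i\in Act_a$ in $\CGSname$ matches the firing of $\gamma_{i,[s]_a}$ in $\Delta_\CGSname$, and the environment's $\gamma_{s,(i_a)_{a\in Ag}}$ implements $\tau(s,(\alpha_{i_a})_{a\in Ag})$. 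The crucial observation is that uniformity of $a$'s choices across $\sim_a$-equivalent states on the iCGS side corresponds exactly to the fact that in $\Delta_\CGSname$ agent $a$ observes only $p_{[s]_a}$ (by the $invis$ clauses hiding $p_s$ and each $p_{[s]_b}$ for $b\neq a$), so her guarded-command selection depends solely on her equivalence class; conversely, any uniform iCGS-strategy lifts to an admissible $\Delta_\CGSname$-strategy.

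The main obstacle will be handling the strategy correspondence under nesting of strategic operators in $ATL^*$: when a new $\naww{A}$ occurs inside $\chi$, the micro-history observed by $A$ in $\Delta_\CGSname$ must be shown to carry the same information as the macro-history observed in $\CGSname$. I would address this by proving that the auxiliary turn- and action-witness atoms yield no observational power beyond the equivalence-class markers --- formally, every history of $\Delta_\CGSname$ projects canonically onto a history of $\CGSname$ while preserving per-agent observational equivalence --- so that projections of $\Delta_\CGSname$-strategies onto macro-boundaries are in bijection with $\CGSname$-strategies, preserving satisfaction of $\psi_{\varphi,\CGSname}$ and $\varphi$ respectively.
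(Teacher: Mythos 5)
Your overall architecture matches the paper's (very terse) argument: the paper likewise defines a translated formula, then claims $\CGSname \models \varphi$ iff $\Delta_\CGSname \models \varphi'$ by structural induction, leaving the run/strategy correspondence and the uniformity-vs-visibility argument implicit. Your proposal actually supplies more of the needed content than the paper does --- the PTIME size count, the bisimulation-style relation $R$ on macro-configurations, the observation that hiding $p_s$ and $p_{[s]_b}$ ($b \neq a$) from $a$ makes her admissible choices depend exactly on $[s]_a$, and the projection argument for nested strategic operators. These are all the right ingredients.

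There is, however, one concrete error you must fix: the granularity of the simulation. In the paper's construction the agents do \emph{not} commit their actions sequentially; all agents fire their $\gamma_{i,[s]_a}$ commands concurrently in a single global step of $\Delta_\CGSname$ (the $turn_a$ flags alternate each agent between an action-selection step and a $\gamma_{fwd_a}$ step, in phase with the environment's $turn_e$), and the environment then fires $\gamma_{s,(i_a)_{a\in Ag}}$ in the following step. So one transition of $\CGSname$ corresponds to exactly \emph{two} steps of $\Delta_\CGSname$, independently of $|Ag|$, and the paper's translation is simply to duplicate each $\nextt$ (for $ATL^*$) or each coalition-next $\naww{A}\nextt$ (for $ATL$), e.g.\ $\nextt p \mapsto \nextt\nextt p$ and $\naww{A}\nextt p \mapsto \naww{A}\nextt\naww{A}\nextt p$. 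With your ``skip $|Ag|+1$ micro-steps'' relativisation the translated formula would be evaluated at the wrong positions of $\Delta_\CGSname$'s runs and the biconditional would fail for this particular construction. Separately, note that the paper does not relativise $\until$ at all: it relies on the fact that the atoms in $\pi(s)$ are only rewritten at macro-boundaries, so they are constant across the intermediate step; your blanket relativisation of all temporal operators to macro-boundaries is the safer route here and would also work once the step count is corrected to two.
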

\begin{proof}
Given an $ATL^*$ formula $\varphi$, we construct the formula $\varphi'$ in
which each next operator is duplicated. For example, for
the formula $\varphi = X p$, we set $\varphi' = X X
 p$.
 For $ATL$ we duplicate each coalition-next operator instead: for
 $\varphi = \naww{A} X p$, we set $\varphi' = \naww{A} X
\naww{A} X p$.
Then we can show that, for any iCGS $\CGSname$, the \CGS\ $\Delta_\CGSname$
constructed as above is such that $\CGSname \models \varphi$ iff $\Delta_\CGSname
\models \varphi'$. We do so by structural induction on the formula
$\varphi$.
\end{proof}

Using Theorem~\ref{th1} above and knowing from~\cite{DimaT11} that
model checking $ATL^*$ and $ATL$ on iCGSs is undecidable,
we can state
the following result:
\begin{corollary}[3.5]
 The model checking problem for $ATL^*$ (resp.~$ATL$) on \CGS\ are
 undecidable.
\end{corollary}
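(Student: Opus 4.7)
The plan is to obtain the corollary as a direct consequence of Theorem~\ref{th1} together with the undecidability of $ATL^*$ (resp.~$ATL$) model checking on iCGS established in~\cite{DimaT11}. The reduction already constructed above, namely $(\CGSname,\varphi) \mapsto (\Delta_{\CGSname},\varphi')$, runs in polynomial time and, by Theorem~\ref{th1}, satisfies the correctness condition $\CGSname \models \varphi$ iff $\Delta_{\CGSname} \models \varphi'$, where $\varphi'$ is obtained by duplicating each $X$ (or each $\naww{A}X$ in the $ATL$ case). This is exactly the shape of a many-one polynomial-time reduction from iCGS model checking to \CGS\ model checking.

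I would then argue by contraposition. Suppose, for the sake of contradiction, that model checking $ATL^*$ on \CGS\ were decidable. Given an arbitrary instance $(\CGSname,\varphi)$ of iCGS model checking, one could compute $(\Delta_{\CGSname},\varphi')$ in polynomial time and invoke the assumed decision procedure on it; by the correctness of the reduction the returned answer decides whether $\CGSname \models \varphi$, contradicting the undecidability result of~\cite{DimaT11}. The $ATL$ case is handled identically, using the $ATL$-specific duplication of $\naww{A}X$ promised in the proof of Theorem~\ref{th1}.

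There is no genuine obstacle here: all the heavy lifting already lives in Theorem~\ref{th1}. The only routine points worth recording in the write-up are \emph{effectiveness} of the construction (the agents, atoms, and guarded commands of $\Delta_{\CGSname}$ are all finitely and explicitly enumerable from the finite description of $\CGSname$) and \emph{polynomial cost} (immediate from the bounds on the number of atoms, agents, and guarded commands produced). Consequently, I expect the corollary to be a one-line composition of the PTIME reduction with the known hardness result.
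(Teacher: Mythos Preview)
Your proposal is correct and mirrors the paper's own argument: the corollary is obtained directly by composing the PTIME reduction of Theorem~\ref{th1} with the undecidability of $ATL^*$/$ATL$ model checking on iCGS from~\cite{DimaT11}. The paper does not spell out the contraposition explicitly, but your write-up is just a routine unpacking of that same one-line inference.
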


We conclude by recalling that if we assume {\em positional
  strategies}, the model checking problem for $ATL^*$ (resp.~$ATL$) on
iCGS are PSPACE- (resp.~$\Delta^P_2$-) complete
\cite{Schobbens04,JamrogaDix06}. Hence, the same complexities apply
to \CGS.

%
%


\bibliographystyle{abbrv}  

\end{document}